%
%
%
%
%
%
%
\documentclass[%
 aip,
jmp,
 amsmath,amssymb,
reprint, onecolumn 
]{revtex4-1}

\usepackage{graphicx}
\usepackage{dcolumn}
\usepackage{bm,mathrsfs,bbold,upgreek}

\usepackage[utf8]{inputenc}
\usepackage[T1]{fontenc}
\usepackage{mathptmx}
\usepackage{amsmath,amssymb,amsfonts}
\usepackage{etoolbox}

\makeatletter
\def\@email#1#2{%
 \endgroup
 \patchcmd{\titleblock@produce}
  {\frontmatter@RRAPformat}
  {\frontmatter@RRAPformat{\produce@RRAP{*#1\href{mailto:#2}{#2}}}\frontmatter@RRAPformat}
  {}{}
}%
\makeatother

\newtheorem{theorem}{Theorem}

\newtheorem{corollary}[theorem]{Corollary}

\newtheorem{proposition}[theorem]{Proposition}
\newtheorem{remark}[theorem]{Remark}

\newenvironment{proof}[1][Proof]{\noindent\textbf{#1.} }{\ \rule{0.5em}{0.5em}}
\begin{document}


\title{Quantum Filtering at Finite Temperature} 



\author{John Gough}
\email{jug@aber.ac.uk}
\affiliation{Aberystwyth University, SY23 3BZ, Wales, United Kingdom}

\date{\today}

\begin{abstract}
We pose and solve the problem of quantum filtering based on continuous-in-time quadrature measurements (homodyning) for the case where the quantum process is in a thermal state.
The standard construction of quantum filters involves the determination of the conditional expectation onto the von Neumann algebra generated by the measured observables - with the non-demolition principle telling us to restrict the domain (the observables to be estimated) to the commutant of the algebra. The finite-temperature case, however, has additional structure: we use the Araki-Woods representation for the measured quadratures, but the Tomita-Takesaki theory tells us that there exists a separate, commuting representation and therefore the commutant will have a richer structure than encountered in the Fock vacuum case. We apply this to the question of quantum trajectories to the Davies-Fulling-Unruh model. Here, the two representations are interpreted as the fields in the right and left Rindler wedges.
\end{abstract}

\pacs{}

\maketitle 


\section{Introduction}
In this paper, we address the question of filtering when the environment is a bosonic input channel in a thermal state and we measure a quadature in its output channel after scattering from the system. The input is naturally described as a linear combination (Bogoliubov transformation) of two vacuum-state (i.e., Fock) inputs - this is an Araki-Woods representation (see Section 2). 

Quantum filtering is the theory of optimal estimation of the state of a quantum system based on indirect time-continuous measurements of its environment \cite{Belavkin,Slava}. An essential assumption is that the environment interacts with the system in a Markovian manner and this requires a filtration: that is, we have a family of algebras $\{ \mathfrak{A}_t \}_{t \in \mathbb{R}}$ with the algebra $\mathfrak{A}_t$ describing the observables of the environment up to time $t$ (the past) with the isotony condition $\mathfrak{A}_s \subset \mathfrak{A}_t$ whenever $t<s$. We must commit to measuring only \textit{compatible} observables and this restricts us to a filtration $\{ \mathfrak{Y}_t \}_{t \in \mathbb{R}}$  of \textit{commutative} sub-algebras $\mathfrak{Y}_t$ of $\mathfrak{A}_t $.

For concrete quantum stochastic models \cite{HP84,GC85,ParthQSC}, one can derive explicit equations for the filter that are analogous to the classical filtering problem \cite{Belavkin,Slava,BvHJ}. Often, the solution is rendered as a stochastic master equation which is of the form found in many of the quantum trajectory approaches to unravel and simulate master equations.

It should be noted that there are several approximations being made here. In addition to Born-Markov type approximations, we are technically replacing the Planck spectrum with a flat (white noise) spectrum. The form of non-Fock quantum stochastic calculus required does not come with a well-defined number process \cite{HL85}: this is why we consider a quadrature measurement rather than counting quanta of the output!

Our construction will use the Tomita-Takesaki theory in order to describe the commutant of the measurement algebra. There are two commuting representations of thermal Boson fields on the same doubled-up Fock space and both come into play. The second field may seem like a convenient mathematical artifact used to derive the filter, but there are examples where both are physical. An example would be the quantum fields restricted to the left and right Rindler wedges. In the final section, we discuss the problem of unraveling the master equation for a uniformly accelerated observer.

\section{Quantum Stochastic Models}

A quantum probability space $(\mathfrak{A},\mathbb{E})$ consists of a von Neumann algebra $\mathfrak{A}$ and a normal state $\mathbb{E}$. We shall assume that the state is faithful for convenience.

In the case of closed quantum systems described by a finite-dimensional algebra and governed by a Hamiltonian $H$, we have the identity $\mathbb{E} [A \alpha_t (B)] = \mathbb{E} [\alpha_{t+i\beta} (B)A]$ where $\alpha_z (A)= e^{+izH} Ae^{-izH}$ is the Heisenberg evolution map with complex parameter $z$ and the state $\mathbb{E}$ corresponds to the density matrix $\varrho = e^{-\beta H}/Z$, $Z= \text{tr} e^{_\beta H}$. More generally, we say that a family of automorphisms $(\alpha_t)$ on $(\mathfrak{A},\mathbb{E})$ satisfies the $\beta$-KMS condition if $:t \to \mathbb{E} [A \alpha_t (B)] $ on $\mathbb{R}$ can be interpolated continuously to $t+i\beta \mapsto \mathbb{E} [\alpha_{t+i\beta} (B)A]$ on $\mathbb{R}+i \beta$ over the strip $\mathbb{R}+i [0,\beta]$ by a function that is analytic inside the strip.

We recall the celebrated GNS construction with the triple $(\mathfrak{H},\pi, \Phi )$ where $\pi$ is a representation of the algebra $\mathfrak{A}$ as bounded operators on a separable Hilbert space $\mathfrak{H}$ and where $\Psi \in \mathfrak{H}$ is a normalized vector such that $\mathbb{E}[X] = \langle \Psi , \, \pi (X) \Psi \rangle$.
The first step is to view $\mathfrak{A}$ as a vector space with \lq\lq kets\rq\rq  $|A\rangle$ for $A\in \mathfrak{A}$. A sesquilinear form is given by $\langle A|B \rangle \triangleq \mathbb{E}[A^\ast B]$. The vector $\Psi = |I\rangle$ is the distinguished vector, and we obtain a representation $\pi$ of $\mathfrak{A}$ on this vector space by the rule $\pi(X) | A \rangle = | XA \rangle$. We may factor out the null elements and take the Hilbert space completion $\mathfrak{H}$. 

We next recall the modular theory of Tomita and Takesaki \cite{TT}. Starting again with $\mathfrak{A}$ viewed as a vector space, we repeat the completion process but with the alternative sesquilinear form $\langle A|B \rangle' \triangleq \mathbb{E}[BA^\ast ]$. The two processes effectively lead to the same Hilbert space, and we may relate the obtained inner products by $\langle A|B\rangle ' = \langle A | \Delta B\rangle$ where $\Delta \ge0$ is a (typically unbounded) operator called the modular operator. It is convenient to introduce the anti-linear operator $S$ mapping $|A\rangle $ to $|A^\ast \rangle$.
One writes $S= \Delta^{-1/2}J$ where $J$ is an anti-linear involution known as the modular conjugation. The first main result of Tomita-Takesaki Theory is that $J \pi (\mathfrak{A}) J$ is the commutant $\pi(\mathfrak{A}')$. The second is that the maps $\sigma_t (\cdot ) = \pi^{-1} \big( \Delta^{-it} \pi (\cdot ) \Delta^{it} \big)$ form a one-parameter group known as the modular group and they satisfy the KMS condition ($\beta \equiv 1$) on $(\mathfrak{A},\mathbb{E})$.

Following from this, we can give an alternative representation $\pi'$ defined by $\pi' (X) | A\rangle = | AX^\ast \rangle$. Whereas $\pi$ was *-linear, $\pi'$ will be *-anti-linear and so $\pi'(X)^\ast$ will be different from $\pi' (X^\ast)$ in general. It is straightforward to show that the two representations commute:
\begin{eqnarray}
    \pi(X)\pi'(Y) \, |A\rangle = |XAY^\ast \rangle = \pi'(Y) \pi (X)\, |A\rangle .
\end{eqnarray}
In fact, $\pi'(\cdot ) \equiv S \pi(\cdot ) S$. For more details see \cite{Maassen}.

\subsection{The Araki-Woods Representation}
The (Bose) Fock space over a one-particle Hilbert space $\mathfrak{h}$ is the Hilbert space $\Gamma (\mathfrak{h}) = \mathbb{C} \oplus (\bigoplus_{k=1}^\infty \otimes_{\text{symm.}}^k\mathfrak{h})$ where $\otimes_{\text{symm.}}^k\mathfrak{h}$ is the completely symmterized subspace of the $k$-fold tensor product $\otimes^k\mathfrak{h}$. For $f \in \mathfrak{h}$, the exponential vector with test function $f$ is given by
\begin{eqnarray}
    \exp (f) = 1 \oplus f \oplus (\frac{f \otimes f}{\sqrt{2!}}) \oplus (\frac{f \otimes f \otimes f}{\sqrt{3!}}) \oplus \cdots .
\end{eqnarray}
These form a total subset of the Fock space and we have $\langle \exp (f) , \exp (g) \rangle = e^{\langle f,g \rangle}$.

A key feature is the tensor product factorization $\Gamma (\mathfrak{h}_1 \oplus \mathfrak{h}_2) \cong \Gamma (\mathfrak{h}_1) \otimes \Gamma (\mathfrak{h}_2)$ which comes from the natural identification $\exp (f_1 \oplus f_2 )
=\exp (f_1) \otimes \exp (f_2)$.
 
The annihilator with test function $g \in \mathfrak{h}$ is defined by
\begin{eqnarray}
    A(g) \, \exp (f) = \langle g , f \rangle \, \exp (f).
\end{eqnarray}
The map $g\mapsto A(g)$ is anti-linear. Its adjoint $A(\cdot)^\ast$ will be linear. Together, they satisfy the canonical commutation relations $[A(f), A(g)^\ast] = \langle f,g \rangle$.

The vector $\Phi = \exp (0) = 1 \oplus 0 \oplus 0 \oplus \cdots$ is called the Fock vacuum and we have
\begin{eqnarray}
    \langle \Phi , e^{A(g)^\ast - A(g)} \Phi \rangle = 
    e^{- \frac{1}{2} \| g \|^2 }.
\end{eqnarray}
More generally, we say that a state $\mathbb{E}$ is a quasi-free gauge-invariant state whenever we have
\begin{eqnarray}
    \mathbb{E} [e^{A(g)^\ast - A(g)} ] 
    = e^{ - \frac{1}{2} \langle g, (2N+1) g\rangle }
\end{eqnarray}
for $N \ge 0$ a fixed observable on $\mathfrak{h}$. The case $N \equiv 0$ recovers the vacuum field.

The Araki-Woods representation $\pi$ of the boson operators for a quasi-free gauge-invariant state $\mathbb{E}$ corresponds to realizing the operators as operators on a doubled-up Fock space $\mathfrak{H}_1\otimes \mathfrak{H}_2$ with \cite{ArakiWoods}
\begin{eqnarray}
    \pi (A(g)) = A_1 (\sqrt{N+1}g) \otimes I_2 + I_1 \otimes A_2 (\sqrt{N} j \, g)^\ast
\end{eqnarray}
Here $\mathfrak{H}_k$ are copies of $\mathfrak{H}$ and $A_k(\cdot )$ corresponding copies of $A(\cdot )$. We also need to introduce an anti-linear involution $j$.

The appropriate vector is $\Psi= \Phi \otimes \Phi$ and we have
\begin{eqnarray}
    \langle \Psi , e^{\pi(A(g))^\ast - \pi(A(g))} \Psi \rangle
    = e^{ - \frac{1}{2} \langle g, (N+1) g\rangle }e^{ - \frac{1}{2} \langle g, N g\rangle }
    =e^{ - \frac{1}{2} \langle g, (2N+1) g\rangle }
\end{eqnarray}
as desired.

For $N >0$ we have that $\Psi$ is cyclic for the representation; in contrast, the Fock vacuum is annihilated by $A(g)$ when $N=0$.

The Tomita-Takesaki Theory may be applied here with the modular group having the action $\sigma_t (A(g)) = A \big( (\frac{N}{N+1})^{it} g \big)$. In particular, the commuting anti-linear representation is given by
\begin{eqnarray}
    \pi' (A(g)) = A_1 (\sqrt{N}  j \, g) ^\ast \otimes I_2 + I_1 \otimes A_2 (\sqrt{N+1} g)
    .
\end{eqnarray}
\subsection{Quantum Stochastic Calculus}
Taking the one-particle space to be $\mathfrak{h}=L^2 (\mathbb{R}_+,dt)$ one defines the annihilation process to be operators $A_t \triangleq A(1_{[0,t]})$ for $t\ge 0$.

For each $t>0$, we have $L^2 (\mathbb{R}_+,dt) \cong L^2([0,t], dt)\oplus L^2([t, \infty ) , dt)$ and by virtue of the Fock space tensor product factorization property $\Gamma (\mathfrak{h}) \cong \mathfrak{H}_t  \otimes \mathfrak{H}_{[t,\infty )} $ where
$\mathfrak{H}_t  =\Gamma (L^2([0,t], dt)) $ and $ \mathfrak{H}_{[t,\infty )} = \Gamma (L^2([t, \infty ) ,dt )$.

The annihilator $A_t$ (along with its adjoint $A_t^\ast$, the creator) acts trivially on the future factor $\mathfrak{H}_{[t,\infty )}$. Fixing a separable Hilbert space $\mathfrak{h}_0$, called the initial Hilbert space, Hudson and Parthasarathy \cite{HP84,ParthQSC} developed an analogue of the Ito calculus where the consider integrals of the form on $\mathfrak{h}_0 \otimes \Gamma (L^2 (\mathbb{R}_+,dt)) $
\begin{eqnarray}
    \int_0^t ( X_s dA_s +Y_s dA_s^\ast +Z_s ds).
\end{eqnarray}
The integrands are adapted processes: that is, $X_t,Y_t,Z_t$ may act non-trivially on the factor $\mathfrak{h}_0 \otimes \mathfrak{H}_t$ but trivially on the future factor. The increments are understood as future-pointing: that is, informally $dA_t = A_{t+dt} -A_t$ for $dt>0$, etc. They establish the following quantum Ito table:
\begin{eqnarray}
        \centering
        \begin{tabular}{c|cc}
           $ \times$ & $dA$ & $dA^\ast$  \\
            \hline
             $dA$   &  0 & $dt$ \\
             $dA^\ast$ &0 &0
        \end{tabular}
        .
\end{eqnarray}

The general form of a quantum stochastic differential equation for a unitary adapted process $U_t$ on $\mathfrak{h}_0 \otimes \Gamma (L^2 (\mathbb{R}_+,dt)) $ is
\begin{eqnarray}
    dU_t = \bigg\{ L\otimes dA_t^\ast - L^\ast \otimes dA_t
    +K \otimes dt \bigg\} U_t
\end{eqnarray}
with $U_0 = I \otimes I$ and
\begin{eqnarray}
    K=- \frac{1}{2} L^\ast L - i H
\end{eqnarray}
and where $L$ and $H=H^\ast$ are bounded operators on $\mathfrak{h}_0$.

\subsubsection{Non-Fock (Thermal) Noise}
One may generalize this to thermal noise with processes $B_t,B_t^\ast$ satisfying a quantum Ito table ($n>0$) \cite{HL85}:
\begin{eqnarray}
        \centering
        \begin{tabular}{c|cc}
           $ \times$ & $dB$ & $dB^\ast$  \\
            \hline
             $dB$   &  0 & $(n+1)dt$ \\
             $dB^\ast$ & $n\, dt$ & 0
        \end{tabular}
        .
\end{eqnarray}
A unitary adapted process $U_t$ can now be obtained from a quantum stochastic differential equation of the form
\begin{eqnarray}
    dU_t = \bigg\{ L\otimes dB_t^\ast - L^\ast \otimes dB_t
    +K \otimes dt \bigg\} U_t ,
    \label{eq:U_QSDE}
\end{eqnarray}
where now
\begin{eqnarray}
    K=- \frac{1}{2} (n+1) L^\ast L - \frac{1}{2} n LL^\ast - i H.
    \label{eq:K}
\end{eqnarray}
The thermal noise can be easily represented using an Araki-Wood construction:
\begin{eqnarray}
    B_t \equiv \sqrt{n+1} A_{1,t} \otimes I_2 + \sqrt{n} I_1 \otimes A_{2,t}^\ast .
    \label{eq:AW_B}
\end{eqnarray}
Here, $A_{k,t}$ are copies of the usual Fock space annihilation process.
We note that we have the alternative processes
\begin{eqnarray}
    B_t' \equiv \sqrt{n} A_{1,t}^\ast \otimes I_2 + \sqrt{n+1} I_1 \otimes A_{2,t} .
    \label{eq:AW_B'}
\end{eqnarray}
The processes $B_t , B^\ast_t$ commute with the $B_t', B_t^{\prime\ast}$

For $X$ an observable on $\mathfrak{h}_0$, we define its evolution under the unitary quantum stochastic dynamics as
\begin{eqnarray}
    j_t (X) = U_t^\ast (X \otimes I) U_t.
\end{eqnarray}
From the quantum Ito calculus, on has
\begin{eqnarray}
    dj_t (X) = j_t ( [X,L]) \otimes dB^\ast_t
    +j_t ([L^\ast ,X] \otimes dB_t
    +j_t ( \mathcal{L} X ) \otimes dt
\end{eqnarray}
where we encounter the Lindblad generator \cite{Lindblad}
\begin{eqnarray}
    \mathcal{L} X = (n+1) \mathcal{L}_L X+ n \mathcal{L}_{L^\ast} X -i[X,H] .
    \label{eq:L}
\end{eqnarray}
where $\mathcal{L}_M X = \frac{1}{2}  [M^\ast ,X]M + \frac{1}{2} M^\ast [X,M ]$.

\begin{proposition}
The output process is defined as $B_t^{\mathrm{out}}= U^\ast_t (I \otimes B_t ) U_t$ and we have
\begin{eqnarray}
    dB_t^{\mathrm{out}} = dB_t + j_t (L) dt .
    \label{eq:io}
\end{eqnarray}
\end{proposition}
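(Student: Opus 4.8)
The plan is to compute the differential of $B^{\mathrm{out}}_t = U_t^\ast (I\otimes B_t) U_t$ directly by the quantum It\^o product rule, regarding the middle factor $I\otimes B_t$ as an adapted process with It\^o differential $I\otimes dB_t$. Writing $dU_t = G\,U_t$ with $G = L\otimes dB_t^\ast - L^\ast\otimes dB_t + K\otimes dt$ (so $dU_t^\ast = U_t^\ast G^\ast$), the product rule for a triple produces seven terms, which I would organise into the three in which $I\otimes B_t$ is left undifferentiated and the four in which it is.

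For the first group: $B_t$ is localized in $[0,t]$, whereas $G$ and $G^\ast$ are built only from the increments over $[t,t+dt]$ (together with $K\otimes dt$, which is trivial on the noise), so $I\otimes B_t$ commutes with $G$ and $G^\ast$. Collecting the three terms and pulling $I\otimes B_t$ through, they assemble into $U_t^\ast (I\otimes B_t)\,(G + G^\ast + G^\ast G)\,U_t$; but $U_t^\ast(G+G^\ast+G^\ast G)U_t = d(U_t^\ast U_t) = 0$ since $U_t$ is unitary --- equivalently, $G+G^\ast+G^\ast G = [(K+K^\ast)+(n+1)L^\ast L+nLL^\ast]\otimes dt$ vanishes by the defining relation \eqref{eq:K}. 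Hence this group contributes nothing.

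For the second group, dropping the higher-order triple-differential term, the survivors are $U_t^\ast(I\otimes dB_t)U_t + (dU_t^\ast)(I\otimes dB_t)U_t + U_t^\ast(I\otimes dB_t)(dU_t)$. The first equals $dB_t$, since $dB_t$ is a future increment commuting with the adapted $U_t,U_t^\ast$. For the other two I would feed in the thermal It\^o table $dB_t\,dB_t^\ast=(n+1)dt$, $dB_t^\ast\,dB_t=n\,dt$, obtaining $(dU_t^\ast)(I\otimes dB_t)U_t = -n\,U_t^\ast(L\otimes I)U_t\,dt = -n\,j_t(L)\,dt$ and $U_t^\ast(I\otimes dB_t)(dU_t) = (n+1)\,U_t^\ast(L\otimes I)U_t\,dt = (n+1)\,j_t(L)\,dt$. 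Adding, $dB^{\mathrm{out}}_t = dB_t + j_t(L)\,dt$.

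The point needing the most care is the commutation bookkeeping on the doubled Fock space of the Araki--Woods representation \eqref{eq:AW_B}: that $B_t$ --- a combination of $A_{1,t}$ and $A^\ast_{2,t}$, each localized in $[0,t]$ in its own copy --- really commutes with the future increments in $G$, and that the table used above is the correct thermal one. It is exactly here that the $(n+1)$ and the $n$ corrections must enter with opposite signs, coming from $L\otimes dB^\ast_t$ versus $-L^\ast\otimes dB_t$ in $G$, so that they cancel down to the coefficient $1$ rather than $2n+1$. A more structural alternative, worth noting, is to first observe that $U_s^\ast(I\otimes B_t)U_s$ is independent of $s\ge t$ --- because $U_s=V_{t,s}U_t$ with $V_{t,s}$ adapted to $[t,s]$ and hence commuting with $I\otimes B_t$ --- whence $dB^{\mathrm{out}}_t = U_{t+dt}^\ast(I\otimes dB_t)U_{t+dt}$, which one then expands via $U_{t+dt}=(I+G)U_t$ to reach the same answer.
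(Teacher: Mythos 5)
Your proof is correct and follows essentially the same route as the paper: apply the quantum It\^o product rule to $U_t^\ast (I\otimes B_t) U_t$ with the thermal table $dB_t^\ast\,dB_t = n\,dt$, $dB_t\,dB_t^\ast = (n+1)\,dt$, so that the two correction terms $-n\,j_t(L)\,dt$ and $(n+1)\,j_t(L)\,dt$ cancel to leave $dB_t + j_t(L)\,dt$. The only difference is that you make explicit why the terms with $B_t$ undifferentiated drop out (unitarity, i.e.\ the defining relation (\ref{eq:K}) for $K$), a step the paper leaves implicit.
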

\begin{proof}
    While (\ref{eq:io}) is formally identical to the $n=0$ case, the derivation deserves some attention. Retaining the non-trivial quantum Ito corrections we find
    \begin{eqnarray}
        dB_t^{\text{out}} &=& dB_t + (dU_t^\ast) (I \otimes dB_t) U_t + U_t^\ast (I \otimes dB_t) dU_t
        \nonumber \\
        &=& dB_t - j_t (L) \, n dt +j_t(L) \, (n+1) dt, 
    \end{eqnarray}
    so that $n$-dependent terms cancel leaving the Fock vacuum expression.
\end{proof}

\subsection{Application to the Fulling-Davies-Unruh Effect}

We consider a (classical) observer moving along a world line $\mathscr{W}$ in Minkowski spacetime. The past of the world line is the union $\mathcal{P}_{\mathscr{W}}$ of all past light cones emanating from the observer, that is, all the events in spacetime that can send a signal to the observer. Likewise, the future of the world line is the union $\mathcal{F}_{\mathscr{W}}$ of all future light cones and consists of all the events that can receive a signal from the observer. In the case of an inertial observer both the past and the future will correspond to the entire spacetime. One might suppose that this is true for non-inertial observers, however, this is not the case. The world line $\mathscr{W}$ of a uniformly accelerating observer has the property that there exists a nontrivial region of spacetime which can neither receive signals from the observer nor send signals to the observer. See Figure \ref{fig:4_regions}.

\begin{figure}[h]
    \centering
    \includegraphics[width=0.4\linewidth]{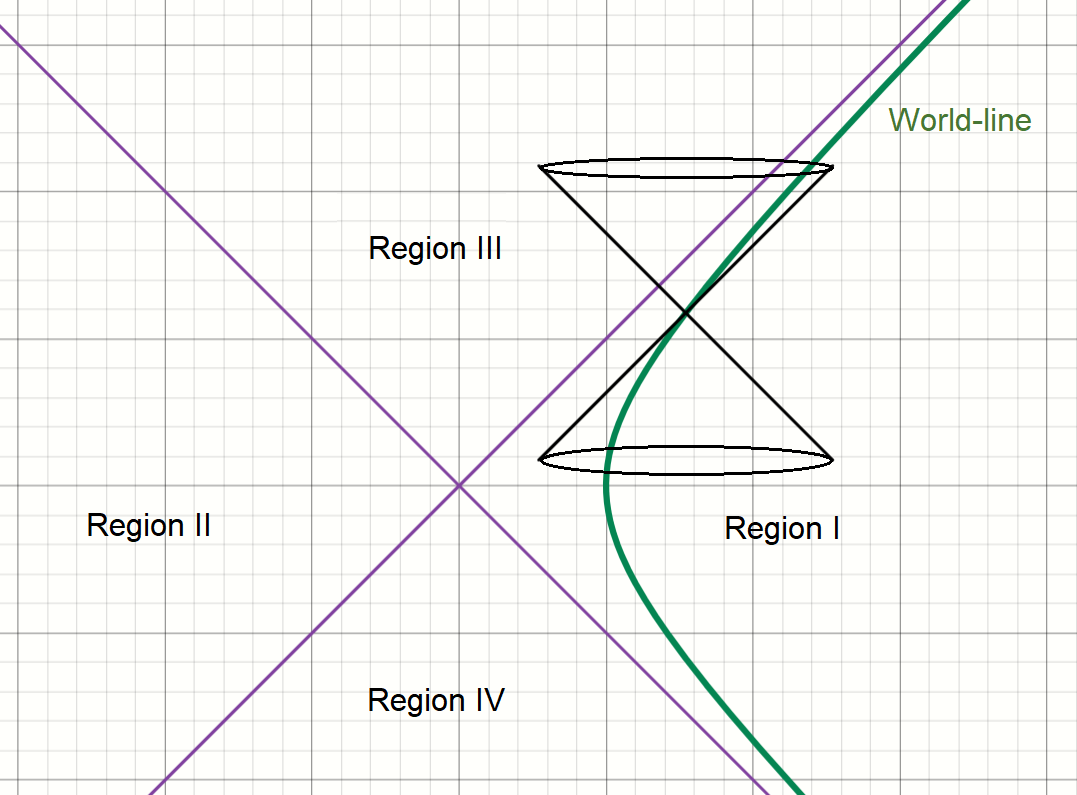}
    \caption{(color online) The world line of a uniformly accelerated observer. Spacetime is divided into four regions. Region III can received signals from the observer but cannot send signals. The situation is reversed for Region IV. Regions I and II are the causal complements to each other.}
    \label{fig:4_regions}
\end{figure}

In Figure \ref{fig:4_regions}, Region I is $\mathcal{P}_{\mathscr{W}}\cap \mathcal{F}_{\mathscr{W}}$ while Region II is $\mathcal{P}_{\mathscr{W}}^\prime\cap \mathcal{F}^\prime_{\mathscr{W}}$. They are causally complementary insofar as the separation between any event in Region I and any event in Region II is always spacelike.

A remarkable feature is the Fulling-Davies-Unruh effect \cite{Fulling}-\cite{Birrell_Davies} which predicts that the uniformly accelerating observer will perceive a vacuum quantum field as a thermal field with temperature $T= \frac{\hbar a}{2 \pi c k_B}$.

From a mathematical perspective, relativistic quantum field theory based on algebras of observables localized in bounded regions are typically type III$_1$ von Neumann algebras rather than the more familiar type I encountered in quantum theory \cite{Yngvason}. The algebraic statements of the Fulling-Davies-Unruh effect and the related Hawking effect can be found in, for example, \cite{Hollands_Wald}.

 The physical interpretation is controversial. There is a consensus that indirect effects (such as thermalization to the Unruh temperature $T$) can be observed; however, whether or not there is radiation that can be directly observed has been the subject of debate. There are good reasons to believe that Unruh radiation itself is not observable \cite{F_OC}. 

In a recent publication \cite{Gough_Entropy2025}, it is shown via a weak coupling limit argument that one may obtain a non-Fock Markovian description of the open dynamics for a system undergoing a uniform acceleration. This yields a quantum stochastic differential equation of the form (\ref{eq:U_QSDE}) in, for instance, the case of an Unruh-DeWitt detector. Here, the detector is a two-level atom  $L= \sqrt{\kappa} \, \sigma_-$ where $\sigma_-$ is the lower operator and which picks up a harmonic dependence $e^{-i\omega t}$ under the free detector dynamics ($H\equiv \omega \, \sigma_+\sigma_-$). The argument can be extended to general $L$ provided the free Heisenberg dynamics is again of the form $e^{-i \omega t} L$. The number of quanta parameter will be the standard bosonic form 
\begin{eqnarray}
    n = \frac{1}{e^{\beta \hbar \omega }-1}
\end{eqnarray}
with $\beta$ the inverse Unruh temperature.

In the following, we will derive the filter corresponding to homodyning the output from an Unruh-DeWitt detector.

\section{Quantum Filtering at Finite Temperature}
We shall consider an open system with Hilbert space $\mathfrak{h}_0$ coupled to a heat bath with quantum stochastic evolution given by the quantum stochastic differential equation (\ref{eq:U_QSDE}). We shall adopt the Araki-Woods representation for the bath processes as described in (\ref{eq:AW_B}). Without loss of generality, we take the state of the system to be the pure state corresponding to a vector $\phi$ and with an abuse of notation just write $\mathbb{E}$ for the separable state where the system is in state $\phi $ and the bath is in the thermal state. The latter then corresponds to the joint Fock vacuum $\Psi = \Phi \otimes \Phi$ in the Araki-Woods representation.

We shall consider measurements of the output quadrature $Y_t = B_t^{\text{out}} +B_t^{\text{out}\ast} $. The von Neumann algebra $\mathfrak{Y}_t$ generated by the observables $\{ Y_s: s\in [0,t]\}$ is commutative. Our aim is to calculate the conditional expectation $\pi_t (X)$ of an observable $j_t (X)$ onto the measurement algebra $\mathfrak{Y}_t$:
\begin{eqnarray}
    \pi_t (X) \triangleq \mathbb{E} [j_t (X)|\mathfrak{Y}_t].
\end{eqnarray}
This conditional expectation is guaranteed to exist since we have the nondemolition property $j_t(X) \in \mathfrak{Y}_t'$ and so one may use the standard spectral theorem construction from \cite{BvH}.

The output quadrature may be written as $Y_t = U_t^\ast  ( I \otimes Z_t) U_t$ where $Z_t = (B_t+B_t^\ast)$. The von Neumann algebra generated by the observables $\{ Z_s: s\in [0,t]\}$ is likewise commutative and is denoted $\mathfrak{Z}_t$. We note that 
\begin{eqnarray}
    dY_t = dZ_t + j_t (L+L^\ast) \, dt
\end{eqnarray}
and that
\begin{eqnarray}
    dY_t \, dY_t = (2n+1) \, dt.
    \label{eq:dY}
\end{eqnarray}

\bigskip

WE now come to a crucial point. Let us note the following nonzero quantum Ito table between the unprimed and primed noises:
\begin{eqnarray}
    dB\, dB'= dB^\ast \, dB^{\prime\ast}=dB'\, dB= dB^{\prime\ast}dB^\ast = \sqrt{(n+1)n}\, dt.
\end{eqnarray}
We introduce the process $Z_t' = \frac{1}{i}(B_t'-B_t^{\prime \ast})$. The $Z$ and $Z'$ processes commute with each other, however, we note the nontrivial Ito table
\begin{eqnarray}
        \centering
        \begin{tabular}{c|cc}
           $ \times$ & $dZ$ & $dZ'$  \\
            \hline
             $dZ$   &  $(2n+1) \, dt$ & $0$ \\
             $dZ'$ & $0$  & $(2n+1) \, dt$
        \end{tabular}
        .
      \label{eq:Z_table}
\end{eqnarray}
In principle, we could have chosen any quadrature $Z'_t= e^{i\lambda}B_t'+e^{-i\lambda }B_t^{\prime\ast}$ so that $Z$ and $Z'$ are commuting (an therefore essentially classical) stochastic processes, but onlt the choice $\lambda= \pi/2$ leads to them being uncorrelated and therefore statistically independent.

\begin{theorem}[Quantum Kallianpur-Striebel]
    The quantum filter is given by
    \begin{eqnarray}
        \pi_t (X) = \frac{\sigma_t (X)}{\sigma_t(I)}
        \label{eq:QKS}
    \end{eqnarray}
    where $\sigma_t (X) = U_t^\ast \mathbb{E}[ V_t^\ast X V_t | \mathfrak{Z}_t ] U_t$ with $V_t \in \mathfrak{Z}_t'$ satisfying the quantum stochastic differential equation
    \begin{eqnarray}
        dV_t = \bigg(\frac{1}{2n+1}\big[ (n+1)L - nL^\ast \big]\otimes dZ_t  - \frac{\sqrt{(n+1)n}}{2n+1} (L+ L^\ast ) \otimes dZ'_t 
         + K \otimes dt \bigg) V_t ,
        \label{eq:dV}
    \end{eqnarray}
    with $V_0 = I\otimes I$.
\end{theorem}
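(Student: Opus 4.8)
\medskip
\noindent\textbf{Proof plan.} The plan is to carry out the reference-probability (Kallianpur--Striebel) construction in the Araki--Woods picture; the only genuinely new ingredient is the identification of the coefficients in (\ref{eq:dV}). Throughout, write $\Omega := \phi\otimes\Psi$, so that $\mathbb{E}[\,\cdot\,] = \langle\Omega,\cdot\,\Omega\rangle$, and suppress the tensor factors $\otimes I_2$ and $I_1\otimes$, writing $dA_{1,t}, dA_{2,t}$ for the two amplified Fock increments appearing in (\ref{eq:AW_B}); thus $dB_t = \sqrt{n+1}\,dA_{1,t} + \sqrt{n}\,dA_{2,t}^\ast$, $dB_t^\ast = \sqrt{n+1}\,dA_{1,t}^\ast + \sqrt{n}\,dA_{2,t}$, and, with (\ref{eq:AW_B'}), $dZ_t = \sqrt{n+1}(dA_{1,t}+dA_{1,t}^\ast) + \sqrt{n}(dA_{2,t}+dA_{2,t}^\ast)$ and $dZ'_t = \sqrt{n}(dA_{1,t}+dA_{1,t}^\ast) + \sqrt{n+1}(dA_{2,t}+dA_{2,t}^\ast)$. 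The elementary fact driving everything is that $U_t$ and $V_t$ are adapted to $[0,t]$ while $dA_{1,t}, dA_{2,t}$ act on the increment factor $\mathfrak{H}_{[t,t+dt]}$, so these annihilators commute with $U_t$ and $V_t$ and kill $U_t\Omega$ and $V_t\Omega$ (whose future-factor component is the Fock vacuum). Hence, \emph{acting on such vectors}, $dB_t^\ast$ reduces to $\sqrt{n+1}\,dA_{1,t}^\ast$, $dB_t$ to $\sqrt{n}\,dA_{2,t}^\ast$, $dZ_t$ to $\sqrt{n+1}\,dA_{1,t}^\ast + \sqrt{n}\,dA_{2,t}^\ast$, and $dZ'_t$ to $\sqrt{n}\,dA_{1,t}^\ast + \sqrt{n+1}\,dA_{2,t}^\ast$.

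First I would prove the key lemma: $U_t\Omega = V_t\Omega$ for all $t\ge0$. Since $\Omega$ is a fixed vector, no Ito correction arises: $d(U_t\Omega) = (dU_t)\Omega$ and $d(V_t\Omega) = (dV_t)\Omega$ (the corrections of (\ref{eq:Z_table}) would only enter were one differentiating a product of two Ito processes). Inserting the reductions above into (\ref{eq:U_QSDE}), the vector $\xi_t := U_t\Omega$ solves the linear quantum stochastic differential equation $d\xi_t = \{\sqrt{n+1}\,L\otimes dA_{1,t}^\ast - \sqrt{n}\,L^\ast\otimes dA_{2,t}^\ast + K\otimes dt\}\,\xi_t$ with $\xi_0 = \Omega$. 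Inserting the same reductions into the general ansatz $dV_t = \{\alpha\otimes dZ_t + \gamma\otimes dZ'_t + K\otimes dt\}V_t$ shows that $V_t\Omega$ solves the same kind of equation, now with coefficient $\sqrt{n+1}\,\alpha + \sqrt{n}\,\gamma$ on $dA_{1,t}^\ast$ and $\sqrt{n}\,\alpha + \sqrt{n+1}\,\gamma$ on $dA_{2,t}^\ast$. Matching the two forces $\sqrt{n+1}\,\alpha + \sqrt{n}\,\gamma = \sqrt{n+1}\,L$ and $\sqrt{n}\,\alpha + \sqrt{n+1}\,\gamma = -\sqrt{n}\,L^\ast$, whose unique solution is $\alpha = (n+1)L + nL^\ast$, $\gamma = -\sqrt{(n+1)n}\,(L+L^\ast)$ --- precisely (\ref{eq:dV}). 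With these coefficients both $U_t\Omega$ and $V_t\Omega$ solve the same linear equation from the same initial vector, so by uniqueness $U_t\Omega = V_t\Omega$.

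Next I would record the structural facts. Because the increments of $U_t$ point to the future, $Y_s = U_t^\ast(I\otimes Z_s)U_t$ for every $s\le t$, whence $\mathfrak{Y}_t = U_t^\ast\,\mathfrak{Z}_t\,U_t$. The solution $V_t$ of (\ref{eq:dV}) is a norm limit of iterated integrals in $dZ_s, dZ'_s$ ($s\le t$) and operators on $\mathfrak{h}_0$, each of which commutes with every $Z_s$ ($s\le t$); therefore $V_t\in\mathfrak{Z}_t'$ (as claimed) and $V_t^\ast(X\otimes I)V_t\in\mathfrak{Z}_t'$, so $M := \mathbb{E}[V_t^\ast(X\otimes I)V_t|\mathfrak{Z}_t]$ and $M_0 := \mathbb{E}[V_t^\ast V_t|\mathfrak{Z}_t]\ge0$ exist in $\mathfrak{Z}_t$ by the same spectral-theorem construction of \cite{BvH} invoked for $\pi_t$. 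Consequently $\sigma_t(X) = U_t^\ast M U_t\in\mathfrak{Y}_t$ and $\sigma_t(I) = U_t^\ast M_0 U_t$, while $\mathbb{E}[\sigma_t(I)] = \|V_t\Omega\|^2 = \|U_t\Omega\|^2 = 1$ by the key lemma and unitarity of $U_t$; and $V_t$ is invertible (as in the Fock case, by a companion quantum stochastic differential equation for $V_t^{-1}$), so $\sigma_t(I)$ is invertible in $\mathfrak{Y}_t$ and the quotient in (\ref{eq:QKS}) is meaningful.

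Finally, since $\mathbb{E}$ is faithful on the commutative algebra $\mathfrak{Y}_t$, it suffices to show that $\pi_t(X)\,\sigma_t(I)$ and $\sigma_t(X)$ pair identically against every $C = U_t^\ast C' U_t \in \mathfrak{Y}_t$, $C'\in\mathfrak{Z}_t$. For the left-hand pairing, $\sigma_t(I)C = U_t^\ast M_0 C' U_t\in\mathfrak{Y}_t$, and the nondemolition property $j_t(X)\in\mathfrak{Y}_t'$ gives $\mathbb{E}[\pi_t(X)\sigma_t(I)C] = \mathbb{E}[j_t(X)\sigma_t(I)C] = \langle U_t\Omega,\,(X\otimes I)M_0C'\,U_t\Omega\rangle$; applying the key lemma, then moving $M_0C'\in\mathfrak{Z}_t$ past $V_t, V_t^\ast$ and $X\otimes I$ (all in $\mathfrak{Z}_t'$) and using the module property of $\mathbb{E}[\cdot|\mathfrak{Z}_t]$, this becomes $\mathbb{E}[M M_0 C']$. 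For the right-hand pairing, $\mathbb{E}[\sigma_t(X)C] = \langle U_t\Omega,\,MC'\,U_t\Omega\rangle = \langle V_t\Omega,\,MC'\,V_t\Omega\rangle = \mathbb{E}[MC'\,V_t^\ast V_t] = \mathbb{E}[MC'\,M_0] = \mathbb{E}[M M_0 C']$, using the key lemma, $MC'\in\mathfrak{Z}_t$, the module property, and commutativity of $\mathfrak{Z}_t$. The pairings agree, so $\pi_t(X)\,\sigma_t(I) = \sigma_t(X)$, which is (\ref{eq:QKS}). The main obstacle is the key lemma: one must see that matching the two reduced equations pins down (\ref{eq:dV}) uniquely; granting $U_t\Omega = V_t\Omega$, the rest is the standard reference-probability bookkeeping, which goes through verbatim because $V_t \in \mathfrak{Z}_t'$ and $M, M_0 \in \mathfrak{Z}_t$.
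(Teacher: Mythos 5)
Your proposal is correct and follows essentially the same route as the paper: you construct $V_t$ by reducing the increments on the (adapted) vacuum vector in the Araki--Woods representation and matching coefficients, which is the same computation the paper performs by substituting $dA_{1,t}^\ast\Psi,\, dA_{2,t}^\ast\Psi$ in terms of $dZ_t\Psi,\, dZ_t'\Psi$, yielding $U_t(\phi\otimes\Psi)=V_t(\phi\otimes\Psi)$ and hence the required equality of expectations. The only difference is that where the paper invokes Lemma 6.1 of \cite{BvH} for the Kallianpur--Striebel quotient, you reprove that lemma inline via the pairing/faithfulness argument (with the usual mild technical caveats on boundedness/invertibility of $V_t$ and $\sigma_t(I)$, which the paper also leaves to \cite{BvH}).
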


\begin{proof}
    The form (\ref{eq:QKS}) follows from \cite{BvH} Lemma 6.1 provided we can construct $V_t \in \mathfrak{Z}_t'$ with the property that $\mathbb{E}[j_t(X)]= \mathbb{E}[V_t^\ast (X \otimes) V_t]$. We now give this construction.

    Employing the Araki-Woods representation, the overall state comes from the vector state $\phi \otimes \Psi$ and we note that
    \begin{eqnarray}
        dU_t \, \phi \otimes \Psi \equiv
        \bigg( \sqrt{n+1}L\otimes dA_{1,t}^\ast - \sqrt{n} L^\ast \otimes dA_{2,t}^\ast +K \otimes dt \bigg) U_t \, \phi \otimes \Psi,
        \label{eq:QSDE_AW}
    \end{eqnarray}
with $K$ given by (\ref{eq:K}). The annihilator differentials $dA_{k,t}$ automatically disappear as they act on the corresponding vacuum. We now note that
\begin{eqnarray}
    dZ_t \, \Psi &=& \sqrt{n+1}\, dA_{1,t}^\ast \Psi+ \sqrt{n}\, dA_{2,t}^\ast\Psi , \nonumber \\
    dZ_t' \,\Psi &=& \frac{1}{i}\sqrt{n}\, dA_{1,t}^\ast \Psi - \frac{1}{i} \sqrt{n+1}\, dA_{2,t}^\ast\Psi ,
\end{eqnarray}
which may be inverted to give
\begin{eqnarray}
    dA_{1,t}^\ast \, \Psi &=& \frac{\sqrt{n+1}}{2n+1}\, dZ_t \, \Psi +i \frac{ \sqrt{n}}{2n+1} \,dZ'_t \, \Psi \nonumber \\
    dA_{2,t}^\ast \, \Psi &=&  \frac{\sqrt{n}}{2n+1}\, dZ_t \, \Psi -i \frac{ \sqrt{n+1}}{2n+1} \,dZ'_t \, \Psi .
\end{eqnarray}
Making these replacements in (\ref{eq:QSDE_AW}) leads to
\begin{eqnarray}
        dU_t \, \phi \otimes \Psi &\equiv&
        \bigg( \big( \frac{n+1}{2n+1}L - \frac{n}{2n+1}L^\ast \big) \otimes dZ_t  +i \frac{ \sqrt{n(n+1)}}{2n+1}\, (L+ L^\ast ) \otimes dZ'_t  \nonumber \\
        &&\qquad \qquad \qquad +K \otimes dt \bigg) U_t \, \phi \otimes \Psi .
        \label{eq:QSDE_Z}
    \end{eqnarray}
It is now clear that $\mathbb{E}[j_t (X)] = \langle \phi \otimes \Psi , U_t^\ast (X \otimes I ) U_t \, \phi \otimes \Psi \rangle \equiv \langle \phi \otimes \Psi , V_t^\ast (X \otimes I ) V_t \, \phi \otimes \Psi \rangle$, as desired.
\end{proof}

\begin{corollary}(Quantum Zakai Equation)
\label{Cor_QZE}
    The unnormalized filter $\sigma_t (X) $ satisfies
    \begin{eqnarray}
        d\sigma_t (X) = \sigma_t (\mathcal{L} X) \, dt
        + \frac{1}{2n+1}\bigg\{ (n+1) \sigma_t (XL+L^\ast X) -n \, \sigma_t (XL^\ast + L X ) \bigg\} dY_t .
        \label{eq:qZakai}
    \end{eqnarray}
    Here, $\mathcal{L}$ is the Lindblad generator from (\ref{eq:L}).
\end{corollary}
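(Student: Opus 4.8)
The plan is to differentiate the product $\sigma_t(X) = U_t^\ast\, \mathbb{E}[V_t^\ast X V_t \mid \mathfrak{Z}_t]\, U_t$ using the quantum It\^o rule, keeping careful track of the non-Fock It\^o corrections encoded in the $dZ$--$dZ'$ table~(\ref{eq:Z_table}), and then to re-express the result purely in terms of the innovations $dY_t$. First I would work with the inner (conditioned) object $\tilde\sigma_t(X) \triangleq \mathbb{E}[V_t^\ast X V_t \mid \mathfrak{Z}_t]$, so that $\sigma_t(X) = U_t^\ast\,\tilde\sigma_t(X)\,U_t$. Using the QSDE~(\ref{eq:dV}) for $V_t$ and its adjoint, I would expand $d(V_t^\ast X V_t) = (dV_t^\ast) X V_t + V_t^\ast X\, dV_t + (dV_t^\ast) X\, dV_t$. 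The drift picks up $V_t^\ast(K^\ast X + XK)V_t\,dt$ plus the It\^o-correction term coming from the $dZ$ and $dZ'$ increments acting on the coefficient operators $(n+1)L+nL^\ast$ and $-\sqrt{(n+1)n}(L+L^\ast)$; using $K^\ast + K = -L^\ast L$ together with (\ref{eq:Z_table}) this drift should collapse exactly to $V_t^\ast(\mathcal{L}X)V_t\,dt$ with $\mathcal{L}$ as in~(\ref{eq:L}) --- this is the first computational checkpoint. The martingale part is $V_t^\ast\big((n+1)(XL+L^\ast X) + n(XL^\ast + LX)\big)V_t \otimes dZ_t$ together with a $dZ'_t$ term.

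Next I would apply the conditional expectation $\mathbb{E}[\,\cdot\mid\mathfrak{Z}_t]$. Since $V_t\in\mathfrak{Z}_t'$ and $Z'$ commutes with $Z$, the key point is that $\mathbb{E}[\,\cdot\,\otimes dZ'_t\mid\mathfrak{Z}_t]$ reduces to a drift: on the vacuum vector one has $dZ'_t\,\Psi$ expressible through the $A^\ast$ differentials, and conditioning onto $\mathfrak{Z}_t$ kills the genuinely ``transverse'' noise, converting the $dZ'_t$ contribution into an ordinary $dt$ term. Carrying this through, $\tilde\sigma_t(X)$ should satisfy a classical (commutative) stochastic differential equation driven by $dZ_t$, namely $d\tilde\sigma_t(X) = \tilde\sigma_t(\mathcal{L}X)\,dt + \big((n+1)\tilde\sigma_t(XL+L^\ast X) + n\,\tilde\sigma_t(XL^\ast+LX)\big)dZ_t$. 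Finally I would conjugate back by $U_t$: writing $\sigma_t(X) = U_t^\ast\tilde\sigma_t(X)U_t$ and using $dU_t$ from~(\ref{eq:U_QSDE}) together with $dY_t = dZ_t + j_t(L+L^\ast)\,dt$, the cross It\^o terms between $dU_t$ and $d\tilde\sigma_t$ and the $U_t$-conjugation of $dZ_t$ rearrange into $dY_t$; all $j_t(\cdot)$-type correction terms generated by the conjugation should be absorbed, using again $K^\ast+K=-L^\ast L$ and the relation between $\mathcal{L}$ and $K$, leaving precisely~(\ref{eq:qZakai}).

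The main obstacle I anticipate is bookkeeping the It\^o corrections correctly when passing through the conditional expectation --- in particular justifying that $\mathbb{E}[\,V_t^\ast M V_t \otimes dZ'_t \mid \mathfrak{Z}_t]$ becomes a drift and identifying the exact coefficient it contributes, since the $dZ$--$dZ'$ table is non-diagonal and the $Z'$ process is not in the measurement algebra. One must verify that this contribution, plus the drift already present, assembles into $\tilde\sigma_t(\mathcal{L}X)$ with no leftover $n$-dependent anomaly. A clean way to handle this is to reuse the substitution $dA_{1,t}^\ast\Psi = \sqrt{n+1}\,dZ_t\Psi - \sqrt{n}\,dZ'_t\Psi$ and its partner from the proof of the theorem, expand everything back in the $dA_{k,t}^\ast$ differentials (where the It\^o table is trivial off the $dA\,dA^\ast$ entry), do the conditioning there, and only at the end re-express the answer via $dY_t$; this reduces the whole computation to the Fock-space rules already established and makes the cancellations transparent.
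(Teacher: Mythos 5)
Your overall route is the same as the paper's: expand $d(V_t^\ast X V_t)$ with the $dZ$--$dZ'$ It\^o table (\ref{eq:Z_table}), check that the drift collapses to $V_t^\ast(\mathcal{L}X)V_t\,dt$, condition onto $\mathfrak{Z}_t$ to remove the $dZ'$ martingale part, and conjugate by $U_t$ to turn $Z$ into $Y$. Two points in your write-up are wrong or misleading, though. First, you invoke $K^\ast+K=-L^\ast L$ (twice); in the thermal case (\ref{eq:K}) gives $K^\ast+K=-(n+1)L^\ast L-nLL^\ast$, and it is precisely this thermal form, combined with the diagonal entries $(2n+1)\,dt$ and the off-diagonal cross terms $2\sqrt{n(n+1)}\,dt$ from (\ref{eq:Z_table}), that makes the paper's operator $\mathcal{M}$ reduce to $\mathcal{L}$; with $-L^\ast L$ you would be left with an $n$-dependent anomaly rather than the clean cancellation. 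Second, your claim that $\mathbb{E}[\,\cdot\otimes dZ'_t\mid\mathfrak{Z}_t]$ ``reduces to a drift'' is not how the term is handled (and, taken literally, would double-count): the $dt$ contributions of $Z'$ enter only through the quadratic-variation terms already included in the drift before conditioning, while the surviving $dZ'_t$ martingale term is simply annihilated because the future-pointing increment has conditional mean zero in the vacuum state relative to $\mathfrak{Z}_t$ (this is the paper's one-line justification). With that correction your stated SDE for $\tilde\sigma_t(X)$ agrees with the paper's intermediate result.

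On the last step, the paper does not differentiate $U_t^\ast\tilde\sigma_t(X)U_t$ by the product rule at all: since $\tilde\sigma_t(X)$ is an adapted integral against $dZ_s$ lying in $\mathfrak{Z}_t$, conjugation by $U_t$ is just the isomorphism carrying $Z_s\mapsto Y_s$ for $s\le t$, so no $j_t(\cdot)$ correction terms arise and nothing needs to be ``absorbed.'' Your fallback suggestion --- re-expanding in the $dA_{k,t}^\ast$ differentials on the vacuum, conditioning there, and only then rewriting in terms of $dY_t$ --- is legitimate and close in spirit to how the theorem's proof constructed $V_t$, but as written your conjugation step rests on the same incorrect identity for $K^\ast+K$ and would need to be redone with the thermal expression.
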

\begin{proof}
    Using the table (\ref{eq:Z_table}), we find that
    \begin{eqnarray}
        V_t^\ast (X \otimes I) V_t &=& X\otimes I+ \frac{1}{2n+1} \int_0^t
        V_s^\ast \big[ X \big( (n+1) L +nL^\ast )\otimes I \big]V_s \otimes dZ_s \nonumber \\
        &+& \frac{1}{2n+1}\int_0^t
        V_s^\ast \big[ \big( (n+1) L^\ast +nL )X \otimes I \big]V_s \otimes dZ_s \nonumber \\
        &+&  \frac{\sqrt{(n+1)n}}{2n+1} \int_0^t
        V_s^\ast \big( \big[ X (L+ L ^\ast ) + (L+L^\ast  )X \big]\otimes I \big)V_s \otimes dZ'_s \nonumber \\
        && + \int_0^t V_s \big( \mathcal{M} X \big) \otimes I \big) V_s\otimes ds
    \end{eqnarray}
    where
    \begin{eqnarray}
        \mathcal{M} X &=& K^\ast X+XK \nonumber \\
        & +& \frac{1}{2n+1} \bigg( (n+1) L^\ast -nL \bigg) X\bigg( (n+1) L -nL^\ast \bigg) + \frac{n(n+1)}{2n+1} (L+L^\ast) X ( L+L^\ast)  .
    \end{eqnarray}
    After a routine expansion and collection of terms, one finds that $\mathcal{M}$ is, in fact, reduces to $\mathcal{L}$.

    We now take the conditional expectation onto $\mathfrak{Z}_t$ to obtain
    \begin{eqnarray}
        \mathbb{E} [ V_t^\ast (X \otimes I) V_t  | \mathfrak{Z}_t] 
        &=& X\otimes I \nonumber +\frac{1}{2n+1} \int_0^t
        \mathbb{E} \big[ 
        V_s^\ast \big[ X \big( (n+1) L +nL^\ast )\otimes I \big]V_s | \mathfrak{Z_s}\big] \otimes dZ_s \nonumber \\
        && -\frac{1}{2n+1}\int_0^t
        \mathbb{E} \big[ 
        V_s^\ast \big[ \big( (n+1) L^\ast  +nL )X\otimes I \big]V_s | \mathfrak{Z_s}\big] \otimes dZ_s \nonumber \\
        &&+  \int_0^t
        \mathbb{E} \big[ 
        V_s^\ast \big[ \mathcal{L}X \otimes I \big]V_s | \mathfrak{Z_s}\big] \, ds
    \end{eqnarray}
    using the fact that the increments $dZ_t'$ will have conditional mean zero.
    This implies that
    \begin{eqnarray}
        d\,  \mathbb{E} [ V_t^\ast (X \otimes I) V_t  | \mathfrak{Z}_t] 
        &=& \frac{1}{2n+1} \mathbb{E} \big[ 
        V_t^\ast \big[ X \big( (n+1) L -nL^\ast )\otimes I \big]V_t | \mathfrak{Z_t}\big] \otimes dZ_t \nonumber \\
        &&-\frac{1}{2n+1}\mathbb{E} \big[ 
        V_t^\ast \big[ \big( (n+1) L^\ast  -nL )\otimes I \big]V_t | \mathfrak{Z_t}\big] \otimes dZ_t \nonumber \\
        &&+  
        \mathbb{E} \big[ 
        V_t^\ast \big[ \mathcal{L}X \otimes I \big]V_t | \mathfrak{Z_t}\big] \, dt .
    \end{eqnarray}
    It only remains to perform the unitary conjugation to get $\sigma_t (X) = U_t^\ast \mathbb{E} [ V_t^\ast (X \otimes I) V_t  | \mathfrak{Z}_t] U_t$ and note that this conjugation converts $Z_t$ into $Y_t$.
\end{proof}

\begin{corollary}(Quantum Stratonovich-Kushner Equation)
    The normalized filter $\pi_t (X)$ satisfies the equation
    \begin{eqnarray}
        d \pi_t (X) &=& \pi_t (\mathcal{L }X) \, dt
        +\bigg\{ \frac{n+1}{2n+1} \big[ \pi_t (XL+L^\ast X) - \pi_t (X) \pi_t (L+L^\ast ) \big] \nonumber \\
        && - \frac{n}{2n+1} \big[ \pi_t (XL^\ast +L X) - \pi_t (X) \pi_t (L+L^\ast ) \big]
        \bigg\} \, dI_t
        \label{eq:SK}
    \end{eqnarray}
    where the innovations process $I_t$ is defined by $I_0=0$ and
    \begin{eqnarray}
        dI_t = dY_t - \pi_t (L+L^\ast )\, dt .
    \label{eq:innovations}
    \end{eqnarray}
\end{corollary}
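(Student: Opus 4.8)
The plan is to deduce (\ref{eq:SK}) from the Zakai equation (\ref{eq:qZakai}) by the It\^o quotient rule applied to $\pi_t(X)=\sigma_t(X)/\sigma_t(I)$. This is legitimate because $\sigma_t(X)$ and $\sigma_t(I)$ both lie in the \emph{commutative} algebra $\mathfrak{Y}_t$, so the stochastic calculus of these processes is of the ordinary (commutative) type, driven by $Y_t$ with quadratic variation $(dY_t)^2=(2n+1)\,dt$ from (\ref{eq:dY}).

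First I would specialize the Zakai equation to $X=I$. Since $\mathcal L$ is a Lindblad generator, $\mathcal L I=0$, so (\ref{eq:qZakai}) collapses to $d\sigma_t(I)=(2n+1)\,\sigma_t(L+L^\ast)\,dY_t=(2n+1)\,\sigma_t(I)\,\pi_t(L+L^\ast)\,dY_t$. I would also note in passing that $\sigma_t(I)=U_t^\ast\,\mathbb{E}[V_t^\ast V_t\,|\,\mathfrak{Z}_t]\,U_t$ is positive and a.s.\ invertible (the QSDE (\ref{eq:dV}) for $V_t$ being linear, $V_t$ is invertible), so that $\pi_t(X)$ is well defined.

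Next, with $(dY_t)^2=(2n+1)\,dt$, the It\^o formula applied to $x\mapsto x^{-1}$ gives
\[
 d\!\left(\sigma_t(I)^{-1}\right)= -\,(2n+1)\,\sigma_t(I)^{-1}\,\pi_t(L+L^\ast)\,dY_t + (2n+1)^3\,\sigma_t(I)^{-1}\,\pi_t(L+L^\ast)^2\,dt .
\]
I would then expand $d\pi_t(X)=d\big(\sigma_t(X)\,\sigma_t(I)^{-1}\big)$ by the product rule, retaining the cross term $d\sigma_t(X)\,d\big(\sigma_t(I)^{-1}\big)$ and again using $(dY_t)^2=(2n+1)\,dt$. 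Dividing (\ref{eq:qZakai}) by $\sigma_t(I)$ and collecting the $dY_t$-terms (the $-\pi_t(X)\pi_t(L+L^\ast)$ contributions being supplied by $\sigma_t(X)\,d(\sigma_t(I)^{-1})$), the coefficient of $dY_t$ comes out as exactly $(n+1)[\pi_t(XL+L^\ast X)-\pi_t(X)\pi_t(L+L^\ast)]+n[\pi_t(XL^\ast+LX)-\pi_t(X)\pi_t(L+L^\ast)]$, the bracket appearing in (\ref{eq:SK}); call it $C_t(X)$.

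Finally I would reorganize the $dt$-terms: one collects $\pi_t(\mathcal L X)$ from the first term, $(2n+1)^3\pi_t(X)\pi_t(L+L^\ast)^2$ from $\sigma_t(X)\,d(\sigma_t(I)^{-1})$, and $-(2n+1)^2\{(n+1)\pi_t(XL+L^\ast X)+n\pi_t(XL^\ast+LX)\}\pi_t(L+L^\ast)$ from the cross term, and checks that these sum to $\pi_t(\mathcal L X)-(2n+1)^2\,C_t(X)\,\pi_t(L+L^\ast)$. Writing $dI_t=dY_t-(2n+1)^2\pi_t(L+L^\ast)\,dt$ then recombines everything into $d\pi_t(X)=\pi_t(\mathcal L X)\,dt+C_t(X)\,dI_t$, which is (\ref{eq:SK}). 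The computation is routine; the one delicate point is the bookkeeping of the powers of $(2n+1)$ — the quadratic variation (\ref{eq:dY}) feeds one factor $(2n+1)$ into $d(\sigma_t(I)^{-1})$ and another into the product-rule cross term, so that $\sigma_t(I)^{-1}$ carries quadratic variation of order $(2n+1)^3$, and it is precisely this that forces the factor $(2n+1)^2$ in the innovations (\ref{eq:innovations}); verifying that the $dt$-terms assemble into $-(2n+1)^2\,C_t(X)\,\pi_t(L+L^\ast)\,dt$ and nothing else is the main thing to get right.
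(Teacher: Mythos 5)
Your proposal is correct and follows essentially the same route as the paper: specialize the Zakai equation to $X=I$ to get $d\sigma_t(I)=(2n+1)\sigma_t(L+L^\ast)\,dY_t$, apply the It\^o formula to $\sigma_t(I)^{-1}$ using $(dY_t)^2=(2n+1)\,dt$, and then use the It\^o product rule on $\sigma_t(X)\sigma_t(I)^{-1}$, recombining the $dt$-terms via the innovations (\ref{eq:innovations}). Your bookkeeping of the $(2n+1)$ powers matches the paper's displayed expression for $d(1/\sigma_t(I))$, and your added remark on the invertibility of $\sigma_t(I)$ is a harmless refinement of what the paper leaves implicit.
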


\begin{proof}
    We first observe that the normalization factor satisfies $d\sigma_t (I) = \frac{1}{2n+1} \sigma_t (L+L^\ast ) \, dY_t$. From the Ito formula and (\ref{eq:dY}), we see that
    \begin{eqnarray}
        d \frac{1}{\sigma_t (I)} = -\frac{1}{2n+1}\frac{\sigma_t (L+L^\ast ) }{\sigma_t (I)^2}  \, dY_t + \frac{1}{2n+1}\frac{ \sigma_t (L+L^\ast)^2 }{\sigma_t (I)^3} \, dt ,
    \end{eqnarray}
    therefore, from the Ito product rule
    \begin{eqnarray}
        d \pi_t (X) = d \frac{\sigma_t (X)}{\sigma _t (I)}
        = d\sigma_t (X) \, \frac{1}{\sigma _t (I)}
        +\sigma_t (X)\, d\frac{1}{\sigma _t (I)}
        +d\sigma_t (X)\, d\frac{1}{\sigma _t (I)}
    \end{eqnarray}
    and the desired result (\ref{eq:SK}) readily follows.
\end{proof}

\begin{remark}
    In the zero temperature case, we may present the filter as a vector state valued process. For instance, Let $|\chi_t \rangle$ satisfy the stochastic differential equation $|d \chi_t \rangle =-( \frac{1}{2}L^\ast L +iH) | \chi_t \rangle + | \chi_t \rangle \, dY_t$ with $|\chi_0\rangle=|\psi_0 \rangle$ a fixed normalized vector. Then $\sigma_t (X)=\langle \chi_t |X | \chi_t\rangle$ satisfies the quantum Zakai equation (\ref{eq:qZakai}) with $n=0$ and initial condition $\sigma_0 (X) = \langle \psi_0 |X| \psi_0 \rangle$. Likewise, $\| \chi_t \|^2 \equiv \sigma_t (I)$ and we may write the filter as $\pi_t (X) = \langle \psi_t |X| \psi_t \rangle$ where $\psi_t = \chi_t /\|\chi_t \|$. This does not naturally extend to the $n>0$ case, however. Instead, we must start with (\ref{eq:QSDE_Z}) which leads to 
    \begin{eqnarray}
         | d \Xi_t \rangle =
        \bigg( \frac{1}{2n+1}\big[ (n+1)L - nL^\ast \big]\otimes dY_t  - \frac{\sqrt{(n+1)n}}{2n+1}\, (L+ L^\ast ) \otimes dY'_t  +K \otimes dt \bigg) | \Xi_t \rangle.
        \label{eq:unravel}
    \end{eqnarray}
    Note that $\Xi_t \rangle$ is adapted to both the output $Y_t$ and additionally to $Y_t'\equiv U_t^\ast [ 1\otimes Z_t']U_t$. The increments $dY_t , dY_t'$ satisfy the same Ito table (\ref{eq:Z_table}) as $dZ_t,dZ_t'$.
    From the Ito calculus, $d\langle\Xi_t |X| \Xi_t \rangle =\langle\Xi_t |X| d \Xi_t \rangle+\langle d\Xi_t |X| \Xi_t \rangle+\langle d\Xi_t |X| d\Xi_t \rangle$ and using the calculations in Corollary \ref{Cor_QZE} for the quantum Zakai equation we find
    \begin{eqnarray}
        d\langle\Xi_t |X| \Xi_t \rangle &=&\langle\Xi_t |\mathscr{L} X|  \Xi_t \rangle dt
        +\frac{1}{2n+1}\langle \Xi_t |\big[ (n+1)L - nL^\ast \big]^\ast X+X\big[ (n+1)L - nL^\ast \big]| \Xi_t \rangle dY_t
        \nonumber \\
        && - \frac{\sqrt{(n+1)n}}{2n+1}\, \langle \Xi_t |(L+L^\ast)X+X(L+L^\ast) | \Xi_t \rangle dY_t' .
    \end{eqnarray}
    The (Zakai) unnormalized filter $\sigma_t (X)$ is then obtained by taking the conditional expectation of $\langle\Xi_t |X| \Xi_t \rangle$ with respect to $\mathfrak{Y}_t$. In this sense, the unraveling is by $| \Xi _t\rangle$ but we need a further conditioning so that we coarse-grain over the measurements and average out the $Y'_t$ process.
\end{remark}

\section{Acknowledgement}
I have the pleasant duty to thank Luc Bouten for several stimulating discussions on quantum filtering, especially with regard to the reference probability approach.

\end{document}